\newcommand{\problemtitle}[1]{\gdef\@problemtitle{#1}}% 
\newcommand{\problemparameter}[1]{\gdef\@problemparameter{#1}}% Store problem title
\newcommand{\probleminput}[1]{\gdef\@probleminput{#1}}% Store problem input
\newcommand{\problemquestion}[1]{\gdef\@problemquestion{#1}}% Store problem question
\par\addvspace{.5\baselineskip}
  \par\addvspace{.5\baselineskip}
\definecolor{lightblue}{rgb}{.88,.88,1}
\ulposdef{\revision}{} % no highlight
\newtheorem{theorem}{Theorem}[section]
\newtheorem{lemma}[theorem]{Lemma}
\newtheorem*{lemma*}{Lemma}
\newtheorem{corollary}[theorem]{Corollary}
\newcommand{\citep}[1]{\cite{#1}}
\newcommand{\myparagraph}[1]{\paragraph{#1.}}
\DeclareMathOperator{\var}{var}
\newcommand{\mes}{\textsc{Inclusion-wise Minimal Separators Enumeration}}
\def\cqedsymbol{\ifmmode$\lrcorner$\else{\unskip\nobreak\hfil
\penalty50\hskip1em\null\nobreak\hfil$\lrcorner$
\parfillskip=0pt\finalhyphendemerits=0\endgraf}\fi}
\begin{document}

\title{On the hardness of inclusion-wise minimal\\ separators enumeration\thanks{The first author acknowledges the support of the French government IDEX-ISITE initiative 16-IDEX-0001 (CAP 20-25) and the French Agence Nationale de la Recherche under contract Digraphs ANR-19-CE48-0013-01. The first and fourth author were partially supported by the ANR project GRALMECO (ANR-21-CE48-0004-01). The third author was partially supported by JSPS KAKENHI Grant Numbers JP21K17812, JP22H03549, JST CREST Grant Number JPMJCR18K3, and JST ACT-X Grant Number JPMJAX2105.}}
\author{%
Caroline Brosse\thanks{Universit\'e Clermont Auvergne, Clermont-Ferrand, France} \thanks{CNRS, Université Côte d'Azur, Inria, I3S, Sophia-Antipolis, France}\and
Oscar Defrain\thanks{Aix-Marseille Universit\'e, Marseille, France}\and
Kazuhiro Kurita\thanks{Nagoya University, Nagoya, Japan}\and
Vincent Limouzy\textsuperscript{$\dagger$}\and
Takeaki Uno\thanks{National Institute of Informatics, Tokyo, Japan}\and
Kunihiro Wasa\thanks{Hosei University, Koganei, Japan}}

\date{August 30, 2023} 

\maketitle

\vspace{-.5cm}
\begin{abstract}%
    Enumeration problems are often encountered as key subroutines in the exact computation of graph parameters such as chromatic number, treewidth, or treedepth.
In the case of treedepth computation, the enumeration of inclusion-wise minimal separators plays a crucial role.
However and quite surprisingly, the complexity status of this problem has not been settled since it has been posed as an open direction by Kloks and Kratsch in 1998. 
Recently at the PACE 2020 competition dedicated to treedepth computation, solvers have been circumventing that by listing all minimal \mbox{$a$-$b$} separators and filtering out those that are not inclusion-wise minimal, at the cost of efficiency.
Naturally, having an efficient algorithm for listing inclusion-wise minimal separators would drastically improve such practical algorithms.
In this note, however, we show that no efficient algorithm is to be expected from an output-sensitive perspective, namely, we prove that there is no output-polynomial time algorithm for inclusion-wise minimal separators enumeration unless $\P = \NP$.

    \vskip5pt\noindent{}{\bf Keywords:} output-sensitive enumeration, \NP-hardness, inclusion-wise minimal separators, minimal separators.
\end{abstract}

\section{Introduction}

In subsets enumeration problems, the goal is to output all subsets that satisfy a specified property.
Problems of that kind are often used to develop theoretical and practical algorithms for solving optimization problems.
Famous examples include maximal independent sets enumeration for the computation of the chromatic number in graphs \citep{lawler1976chromatic,eppstein2002small}, or the enumeration of minimal separators and potential maximal cliques for the computation of treewidth \citep{berry2000generating, bouchitte2002listing, fomin2008exact}.
Given the importance of treewidth, its exact computation has been the subject of several competitions including the 2016 and 2017 editions of the PACE challenge~\citep{Pace}. 
In that context as well, the enumeration approach has proved to be successful.
Specifically, the PACE 2017 winner's program was implemented with an algorithm based on potential maximal cliques enumeration and dynamic programming~\citep{DBLP:journals/jco/Tamaki19}, based on the algorithm by Bouchitt\'e and Todinca~\citep{bouchitte2002listing}.

Another example of the use of subsets enumeration toward exact computation is the computation of the related graph parameter of treedepth, which, in turn, was the subject of the PACE 2020 challenge~\citep{Pace}.
To compute treedepth, a recursive formula based on inclusion-wise minimal separators is known~\citep{Deogun:DAM:1999,Korhonen::2019}.
As in the case of PACE 2016 and 2017, several participants at PACE 2020 implemented programs with a combination of dynamic programming and enumeration, using this recursive formula.
More specifically, the second and third-prize algorithms are based on dynamic programming with respect to small and inclusion-wise minimal separators~\citep{Korhonen::2019,brokkelkamp2020pace,DBLP:conf/iwpec/Korhonen20a}.
An \emph{inclusion-wise minimal separator} in a graph $G$ is an inclusion-wise minimal subset of vertices disconnecting at least two vertices $a$ and $b$.
It is not to be confused with the slightly different notion of \emph{minimal separators} which are defined as inclusion-wise minimal subsets of vertices disconnecting a \emph{specific} pair of vertices $a,b$.
Indeed, minimal separators may contain other minimal separators as a subset.
We refer to Section~\ref{sec:preliminaries} for the definitions and a discussion on the differences between the two notions.
In both implementations~\citep{brokkelkamp2020pace,DBLP:conf/iwpec/Korhonen20a} the minimal separators are first computed by the algorithm of Berry et al.~\citep{berry2000generating}, and those that are not inclusion-wise minimal (or considered too large in~\citep{Korhonen::2019}) are then filtered out.
However, since the gap between minimal and inclusion-wise minimal separators may be exponential in the number of vertices, this approach has a huge impact on efficiency.
Hence, the problem of listing not all minimal separators but inclusion-wise minimal separators finds motivations in the quest of fast implementations for exact treedepth computation. 

However and quite surprisingly, the complexity status of this problem has not been settled since it has been posed as an open direction by Kloks and Kratsch in 1998~\citep{kloks1998listing}.
In this note, we show that unfortunately, no efficient algorithm is to be expected from an output-sensitive perspective.
Namely, we prove that there is no output-polynomial time algorithm for inclusion-wise minimal separators enumeration unless $\P = \NP$.

\section{Preliminaries}\label{sec:preliminaries}

\myparagraph{Enumeration}
When dealing with enumeration problems, the number of solutions can be large (typically exponential) with respect to the input size.
Therefore, we do not aim at algorithms running in polynomial time in the size of the input.
Instead, we need to take into account the potentially large number of solutions of the problem, and look for algorithms running in time polynomial in the input size plus the number of solutions.
Such kinds of algorithms are referred to as \emph{output-polynomial} time algorithms~\citep{Johnson:Yannakakis:Papadimitriou:IPL:1988}.
(We note that the output-polynomial time condition is sometimes stated as being polynomial in the sizes of the input plus the output, however that the two notions coincide here as the solutions we will consider are subsets of the ground set, hence that they are of polynomial size in the size of the input.) % details on output size.
Then, a question of interest in enumeration is that of knowing if there exists an output-polynomial time algorithm for the considered problem.

\myparagraph{Graphs and separators}
Let $G = (V, E)$ be a graph on vertex set $V$ and edge set $E$.
In this paper, we only consider graphs with no loops nor parallel edges.
Let $v \in V$. 
We say that a vertex $u$ is \emph{adjacent to $v$} if $\{u,v\}\in E$, and that an edge $e\in E$ is \emph{incident to $v$} if $v\in e$.
The \emph{neighborhood} of $v$, denoted by $N(v)$, is the set of vertices that are adjacent to $v$ in $G$.
The \emph{degree} of $v$ is $|N(v)|$.
Let $S$ be a subset of vertices of $G$.
The graph \emph{induced by $S$}, denoted by $G[S]$, is the graph $(S, \{ e\in E : e\subseteq S\})$.
By $G-S$ we mean the graph $G[V\setminus S]$.
A \emph{path} in $G$ is a sequence $P=(v_1, \ldots, v_p)$ of distinct vertices such that $\{v_i, v_{i+1}\}\in E$ for any $1\leq i< p$.
It is called an \emph{$s$-$t$ path} if $v_1 = s$ and $v_k = t$.
A graph $G$ is \emph{connected} if for any pair of vertices $u, v \in V$ it contains an $u$-$v$ path.
In this paper, we suppose that $G$ is connected.

For two vertices $a$ and $b$, an \emph{$a$-$b$ separator} is a subset $S$ of vertices such that $a$ and $b$ are not contained in the same connected component in $G-S$.
It is called \emph{minimal} if no proper subset of $S$ is an \emph{$a$-$b$ separator}.
More generally, we say that a subset $S\subseteq V$ is a \emph{separator} if there exist $a,b\in V$ such that $S$ is an $a$-$b$ separator, and that it is a \emph{minimal separator} if there exist $a,b\in V$ such that $S$ is a minimal $a$-$b$ separator.
Note however that two minimal separators may be inclusion-wise comparable, a behavior that was observed in \citep{golumbic2004algorithmic}.
To see this, consider the \emph{banner graph} obtained from a cycle on four consecutive vertices $a,b,c,d$ by adding a pendant vertex $e$ adjacent to $d$, as shown in Figure~\ref{fig:banner}.
Then $\{b,d\}$ is a minimal $a$-$c$ separator, hence a minimal separator.
However, it contains $\{d\}$ as another minimal separator.
We say that a subset $S\subseteq V$ is an \emph{inclusion-wise minimal} separator if $S$ is a separator of $G$ that is minimal with that property.
Clearly, every inclusion-wise minimal separator of $G$ is a minimal separator, but the converse does not hold in general as the above example shows.
Testing whether a set $S$ is a minimal separator can be done in polynomial time~\citep{kloks1998listing,golumbic2004algorithmic}.
As of testing whether a separator $S$ is inclusion-wise minimal, we argue\footnote{Such an assertion is not direct as there exist graph $G$ and sets $S\subset S'\subset S''\subseteq V(G)$ with $S,S''$ being separators and $S'$ not; consider e.g.~a path on five vertices labeled $1,2,3,4,5$ in order, and subsets of vertices $S=\{2\}$, $S'=\{1,2\}$ and $S''=\{1,2,4\}$.} that it amounts to test whether $S\setminus \{u\}$ is not a separator for any $u\in S$, which can also be done in polynomial time.
The first implication follows by inclusion.
For the other direction, let us assume that $S\setminus \{u\}$ is not a separator for any $u\in S$. If $S$ is inclusion-wise minimal we are done, so suppose that there exists another separator $S'\subseteq S$ with $|S'|\leq |S|-2$.
Consider one such $S'$ maximal by inclusion and let $v\in S\setminus S'$.
As $S'\cup \{v\}$ is not a separator and $S'$ is, $S'$ separates $v$ from another vertex $w$ and we deduce $N(v)\subseteq S'$.
Thus $N[v]\subseteq S$ and since $S$ is a separator, so is $S\setminus \{v\}$ which contradicts the maximality of $S'$.

In this paper, we are interested in the following problem.

\begin{figure}
    \centering
    \includegraphics[scale=1.1]{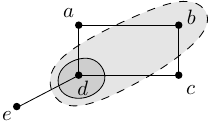}
    \caption{The banner graph. In this example, $\{b,d\}$ is a minimal $a$-$c$ separator but not an inclusion-wise minimal separator.}
    \label{fig:banner}
\end{figure}

\begin{problemgen}
  \problemtitle{\mes{}}
  \probleminput{A graph $G$.}
  \problemquestion{The set of inclusion-wise minimal separators of $G$.}
\end{problemgen}
%%Note that the number of inclusion-wise minimal separators of a graph may be exponential in its number of vertices.
As proven by Gaspers and Mackenzie~{\citep{GaspersM18}}, the number of inclusion-wise minimal separators of a graph may be exponential in its number of vertices.
To see this, consider the \emph{melon graph} 
on $3n + 2$ vertices obtained from $n$ disjoint paths on three vertices $\{u_i, v_i, w_i\}$, $1\leq i\leq n$ by adding an additional vertex $a$ adjacent to the $u_i$'s and an additional vertex $b$ adjacent to the $w_i$'s; see Figure~\ref{fig:melon} for an illustration.
In such a graph, every set in the family $\{\{x_1,\dots, x_n\} : x_i\in \{u_i, v_i, w_i\},\ 1\leq i\leq n\}$ defines an $a$-$b$ minimal separator, which is in fact an inclusion-wise minimal separator. 
As the number of such sets is $3^n$ the observation follows.

On the other hand, the number of minimal separators may be exponential in the number of inclusion-wise minimal separators.
To see this, consider a graph with exponentially many minimal separators in the number of vertices (e.g., a melon graph) to which we add a pendant neighbor to every vertex. 
The resulting graph has $O(n)$ inclusion-wise minimal separators (namely, each vertex defines such a separator) while the number of minimal separators has not decreased while adding the pendant vertices.
Consequently, listing inclusion-wise minimal separators from minimal separators using the algorithm in~\citep{berry2000generating} cannot yield a tractable algorithm.

\begin{figure}
    \centering
    \includegraphics[scale=1.1]{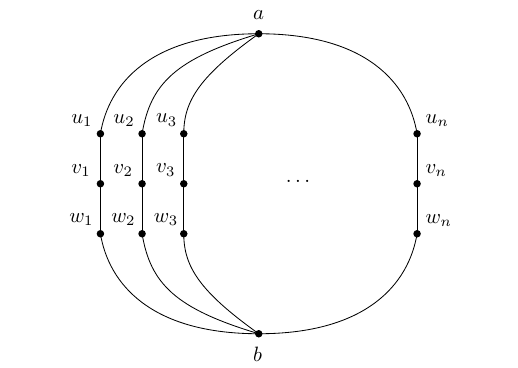}
    \caption{A melon graph on $3n+2$ vertices. Picking one vertex per set $\{u_i, v_i, w_i\}$ for every $1\leq i\leq n$ yields an inclusion-wise minimal separator, and there are $3^n$ such sets.}
    \label{fig:melon}    
\end{figure}

\section{Hardness of \mes{}}\label{sec:hardness}

In this section, we show that there is no output-polynomial time algorithm enumerating the inclusion-wise minimal separators of a graph unless $\P = \NP$.

Let $\phi = C_1 \land C_2 \land \dots \land C_m$, $m\geq 2$ be a $3$-CNF formula on $n$ variables and $m$ clauses.
We describe the construction of a graph $G=(V,E)$ on $O(n+m^2)$ vertices having an inclusion-wise minimal separator of size at least $4$ if and only if $\phi$ is satisfiable.
More specifically, we will show every such inclusion-wise minimal separator to be a minimal $a$-$b$ separator for two distinguished vertices $a$ and $b$.
The set $V$ is partitioned into four sets of vertices $V_1$, $V_2$, $V_3$ and $\{a, b\}$.
Intuitively, vertices in $V_1$ will represent the clauses in $\phi$, those in 
$V_2$ will guarantee consistency in variable assignments, and
$V_3$ will consist of pendant vertices that will be used to reduce the number of inclusion-wise minimal separators that do not separate $a$ and $b$.
The construction is detailed below and illustrated in Figure~\ref{fig:example}.

\begin{figure}
    \centering
    \includegraphics[scale=1.1]{./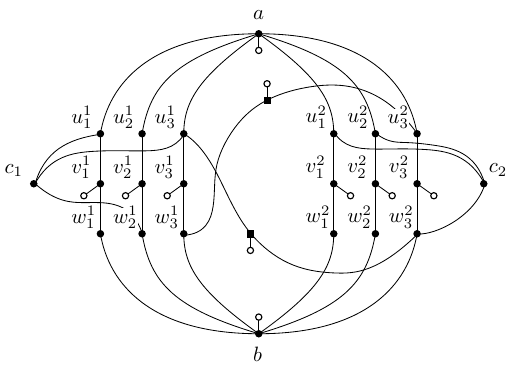}
    \caption{A representation of the graph $G$ associated to $\phi=(\bar{x}_1\vee x_2\vee \bar{x}_5)(\bar{x}_3\vee \bar{x}_4\vee x_5)$.
    Black squares represent vertices in $V_2$: they connect conflicting pairs of vertices arising from distinct clauses, here the vertices associated to the variable $x_5$. 
    White circles represent pendant vertices in $V_3$.}
    \label{fig:example}    
\end{figure}

Let us first define $V_1$ and parts of its incident edges.
For each clause $$C_j=(\ell_{1}^{j} \vee \ell_{2}^{j} \vee \ell_{3}^{j}),\ 1\leq j\leq m$$ in $\phi$, we create three induced paths $u_{1}^{j}v_{1}^{j}w_{1}^{j}$, $u_{2}^{j}v_{2}^{j}w_{2}^{j}$ and $u_{3}^{j}v_{3}^{j}w_{3}^{j}$ on three vertices each and add an extra vertex $c_j$.
We connect $a$ to each of $u_{1}^{j},u_{2}^{j},u_{3}^{j}$, and $b$ to each of $w_{1}^{j},w_{2}^{j},w_{3}^{j}$.
For every $i\in \{1, 2, 3\}$, we make $c_j$ adjacent to $u_{i}^{j}$ if $\ell_{i}^{j}$ is a negative literal, i.e., if $\ell_{i}^{j}=\bar{x}$ for some variable $x$ of $\phi$, and to $w_{i}^{j}$ otherwise.
In the assignments we will construct, selecting $u_{i}^{j}$ will in fact count for $x\mapsto 1$ while selecting $w_{i}^{j}$ will count for $x\mapsto 0$. 
Since $N(c_j)$ defines an inclusion-wise minimal separator of size three, these vertices $c_j$ will ensure that no inclusion-wise minimal separator of size greater than three contains $N(c_j)$ as a subset. 
Hence there will be at least one literal in $C_j$ that is assigned 1, and $C_j$ will be satisfied.
By construction, the obtained set of vertices $V_1$ consists of $10m$ vertices.
In the following, we shall call \emph{middle vertices} of the clause $C_j$ the vertices $v_{1}^{j}, v_{2}^{j}, v_{3}^{j}$ we created for~$C_j$.
Pairs of vertices \smash{$u_{i}^{j},w_{i'}^{j'}$} corresponding to the literals of a same variable, i.e., such that \smash{$\ell_{i}^{j}=\ell_{i'}^{j'}$ or $\ell_{i}^{j}=\bar{\ell}_{i'}^{j'}$} for non-necessarily distinct $i,i'\in \{1,2,3\}$ and $j,j'\in \{1,\dots,m\}$, will be referred to as \emph{conflicting} vertices.
We stress the fact that conflicting vertices are pairs of the form \smash{$u_{i}^{j},w_{i'}^{j'}$}, and that pairs of vertices \smash{$u_{i}^{j},u_{i'}^{j'}$} or \smash{$w_{i}^{j},w_{i'}^{j'}$} will not be considered conflicting even though they correspond to the literals of a same variable.

Now, we define $V_2$ and parts of its incident edges.
For every pair of conflicting vertices $u_{i}^{j},w_{i'}^{j'}$ with $j\neq j'$ we add a new vertex $y$ 
adjacent to \smash{$u_{i}^{j}$ and $w_{i'}^{j'}$}.
Since $N(y)$ defines an inclusion-wise minimal separator of size two, these vertices $y$ will ensure that no two conflicting literals are selected in an inclusion-wise minimal separator of size greater than two.
The set $V_2$ consists of these $y$ and hence has size $O(m^2)$.

Finally, let us define $V_3$ and the remaining edges of $G$.
The set $V_3$ consists of one pendant neighbor $z$ that is added to every vertex in $G-V_1$ as well as to every middle vertex in $V_1$.
Since $N(y)$ has size one, the role of $V_3$ will be to prevent inclusion-wise minimal separators of size greater than one from picking these vertices.
Then the size of $V_3$ is bounded by that of $V_1\cup V_2\cup \{a,b\}$, hence by $O(m^2)$.
This completes the description of $G$.

The proof that $\phi$ is satisfiable if and only if $G$ contains an inclusion-wise minimal separator of size at least $4$ is split into two lemmas.
First, let us prove that any inclusion-wise minimal separator of $G$ of size at least four 
implies a satisfying truth assignment for $\phi$.

\begin{lemma}\label{lem:sep}
    If $S$ is an inclusion-wise minimal separator of $G$ of size at least $4$, then $S\subseteq \{c_j, u_{i}^{j}, w_{i}^{j} : 1\leq i\leq 3,\ 1\leq j\leq m\}$ and $\phi$ is satisfiable.
\end{lemma}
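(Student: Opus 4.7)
My plan is to proceed in three stages: (1) use the pendants of $V_3$ to trap $S$ inside $\{c_j, u_i^j, w_i^j\}$, (2) prove that $S$ must actually separate $a$ from $b$, and (3) read off a satisfying truth assignment of $\phi$ from $S$.

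For stage (1), every vertex of $\{a, b\} \cup V_2 \cup \{v_i^j : 1 \leq i \leq 3,\ 1 \leq j \leq m\}$ carries a pendant in $V_3$. For such an $x$, the singleton $\{x\}$ is already a separator (it isolates the pendant), so having $x \in S$ together with $|S| \geq 4 > 1$ would exhibit a proper separator subset of $S$, violating minimality. Dually, no pendant $z \in V_3$ lies in $S$: its unique neighbor $x$ is outside $S$ by the previous point, so $z$ is a leaf in $G - (S \setminus \{z\})$ and removing it preserves the component structure, making $S \setminus \{z\}$ a separator as well. Together these two observations yield $S \subseteq \{c_j, u_i^j, w_i^j\}$.

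For stage (2), I assume toward contradiction that $a$ and $b$ share a component $A$ in $G - S$, and show $V \setminus S \subseteq A$, contradicting the fact that $S$ is a separator. Every $u_i^j \notin S$ lies in $A$ via its edge to $a$, and every $w_i^j \notin S$ via $b$. For each of the remaining vertex types $v_i^j$, $y \in V_2$, and $c_j$, the set of its non-pendant neighbors has size $2$, $2$, and $3$ respectively; if all of these were contained in $S$, that neighborhood would be a separator of size at most $3 < |S|$ properly contained in $S$, again violating minimality. So at least one such neighbor is outside $S$, placing $v_i^j$, $y$, and $c_j$ in $A$; every pendant then follows its attached vertex.

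For stage (3), because $S$ is an $a$-$b$ separator avoiding every $v_i^j$, the path $a$-$u_i^j$-$v_i^j$-$w_i^j$-$b$ forces at least one of $u_i^j, w_i^j$ into $S$; the same minimality argument rules out both, so the selection is unique per pair $(i, j)$. Interpreting the selection via the convention ``$u \mapsto 1$, $w \mapsto 0$'' yields a candidate truth value for the variable of each literal $\ell_i^j$. Consistency across literals of a common variable holds because an inconsistent pair would place conflicting $u_i^j, w_{i'}^{j'}$ into $S$, and the associated $y \in V_2$ would give $N(y) \subsetneq S$ as a size-$2$ separator. Each clause $C_j$ is satisfied because if every selected vertex of $C_j$ were a neighbor of $c_j$, then $N(c_j) \subsetneq S$ would be a size-$3$ separator, again contradicting $|S| \geq 4$ together with minimality. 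The main subtlety I anticipate is the polarity bookkeeping at this last stage: one has to unwind the $u/w$-selection through the sign of each literal to confirm that the $y$-vertices (which are only defined for $j \neq j'$) forbid exactly the inconsistent combinations, and to dispose of clauses with repeated variables, which we may assume absent in $\phi$ by standard preprocessing for $3$-SAT hardness.
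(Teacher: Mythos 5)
Your proof is correct and follows essentially the same route as the paper: the pendants of $V_3$ confine $S$ to $\{c_j, u_i^j, w_i^j\}$, the size-$2$ and size-$3$ neighborhoods of the middle vertices, the $V_2$-vertices and the $c_j$'s force $S$ to be an $a$-$b$ separator meeting each path $u_i^j v_i^j w_i^j$ in exactly one endpoint, and the same gadgets then give consistency of the induced assignment and satisfaction of every clause. Your stage (2) is organized slightly differently (you show every vertex outside $S$ reaches the component of $a$ and $b$, whereas the paper examines a boundary vertex of a hypothetical second component) and you are more explicit than the paper about assuming each clause has three distinct variables, but these are presentational variants of the same argument.
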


\begin{proof}
    As $|S|\geq 4$, $S$ may not contain $N(x)$ as a subset for any vertex $x$ of degree less than $4$ in $G$.
    In particular, $S$ does not contain the neighbor of any pendant vertex in $G$ and clearly, as $S$ is minimal, it does not contain any pendant vertex neither.
    Since all elements in $V_2$ have pendant neighbors, we derive that $S\cap V_2=\emptyset$.
    For the same reason, neither $a$, $b$ nor any middle vertex may belong to $S$ and we conclude to the desired inclusion $S\subseteq \{c_j, u_{i}^{j}, w_{i}^{j} : 1\leq i\leq 3,\ 1\leq j\leq m\}$.

    We show that $S$ is in fact an $a$-$b$ separator.
    Let us consider two distinct connected components $A$ and $B$ of $G-S$ and suppose, toward a contradiction, that $a$ and $b$ belong to one such component.
    Say without loss of generality that $a,b\in A$.
    Let $u$ be a vertex of $B$ that is adjacent to $S$.
    Clearly, $u$ is not a neighbor of $a$ nor $b$.
    Thus $u$ either is a middle vertex, a vertex from $V_2$ or some $c_j$, $1\leq j\leq m$. 
    
    Now, note that every vertex $v$ that is a neighbor of $u$ and of one of $a,b$ must belong to $S$, since $u$ and $a,b$ belong to different connected components of $G-S$.
    We show that $S$ may not contain two conflicting vertices $u_{i}^{j}$ and $w_{i'}^{j'}$, $i,i'\in \{1,2,3\}$, $\smash{j,j'\in \{1,\dots, m\}}$.
    Indeed, we have two situations; (1) if $j=j'$ then $i=i'$ and $u_{i}^{j}$ and $w_{i'}^{j'}$ separate the middle vertex $v_{i}^{j}$ from the rest of the graph, or (2) if $j\neq j'$ then \smash{$u_{i}^{j}$ and $w_{i'}^{j'}$} separate a vertex from $V_2$ from the rest of the graph.
    In both situations, $S$ would contain as a subset another separator of size less than $4$.
    Therefore, $u$ is not a middle vertex, and neither does it belong to $V_2$.
    The only remaining possibility is that $u=c_j$ for some $1\leq j\leq m$.
    But as every neighbor of $c_j$ is a neighbor of $a$ or $b$, we deduce $N(c_j)\subseteq S$, which is also excluded as $S$ would contain as a subset of another separator of size less than $4$.
    We obtained the desired contradiction and conclude that $S$ is an $a$-$b$ separator.

    We are now ready to show that $\phi$ is satisfiable.
    Let $I$ be the assignment mapping variable $x_k$, $k\in \{1,\dots, n\}$ to $1$ if and only if $w_{i}^{j}\in S$ for some $1\leq i\leq 3,\ 1\leq j\leq m$.
    Since $S$ is an $a$-$b$ separator, every path $u_{i}^{j}v_{i}^{j}w_{i}^{j}$ of $G$ is intersected by $S$, and as previously shown, it is intersected on precisely one vertex that is not conflicting with other vertices in $S$.
    As we may assume that each variable appears in at least one clause in $\phi$, $I$ is a well-defined truth assignment.
    We note that $S$ may or may not contain vertices $c_j$, $1\leq j\leq m$, a point that is not relevant in what follows.
    However, since $N(c_j)\not\subseteq S$ for any $1\leq j\leq m$, for every clause $C_j$ there exists $i\in \{1,2,3\}$ and an endpoint $p$ of $u_{i}^{j}v_{i}^{j}w_{i}^{j}$ such that $p\in N(c_j)$, $p\not\in S$, and hence such that the other endpoint $q$ of $u_{i}^{j}v_{i}^{j}w_{i}^{j}$ belongs to $S$. 
    Thus the literal $\ell_{i}^{j}$ is evaluated to $1$ by $I$.
    We conclude that $I$ is a satisfying truth assignment of $\phi$ as desired.
\end{proof}

Conversely, we will show that each satisfying truth assignment of $\phi$ yields an inclusion-wise minimal separator of $G$ of size at least $4$.
To do this, for any satisfying truth assignment $I$ of $\phi$, we define two sets $T(I)$ and $F(I)$ as follows:
\begin{linenomath*}
\begin{align*}
    T(I):=\{u_{i}^{j} :\ &1\leq i\leq 3,\ 1\leq j\leq m\ \text{and}\ I(\var(\ell_{i}^{j}))=1\},\\
    F(I):=\{w_{i}^{j} :\ &1\leq i\leq 3,\ 1\leq j\leq m\ \text{and}\ I(\var(\ell_{i}^{j}))=0\},
\end{align*}
\end{linenomath*}
where $\var(\ell)$ denotes the variable corresponding to the literal and $I(x)$ denotes the valuation of variable $x$.

\begin{lemma}\label{lem:sat}
    If $I$ is a satisfying truth assignment of $\phi$ then there exists a set of integers $J\subseteq \{1,\dots,m\}$ such that $S=T(I)\cup F(I)\cup \{c_j : j\in J\}$ is an inclusion-wise minimal separator of $G$ of size at least $4$.
\end{lemma}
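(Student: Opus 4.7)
The plan is to set $J := \{j : C_j$ has both a positive literal and a negative literal that are true under $I\}$ and verify that $S$ is an inclusion-wise minimal separator of size at least $4$. The size bound is immediate: for each path $u_i^j v_i^j w_i^j$ exactly one of $u_i^j, w_i^j$ lies in $T(I) \cup F(I)$, determined by $I(\var(\ell_i^j))$, so $|S| \geq 3m \geq 6$.

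To show that $S$ separates $a$ from $b$, I would partition $V \setminus S$ into an \emph{$a$-side} containing $a$ and a \emph{$b$-side} containing $b$, and check that no edge of $G$ crosses the partition. The $a$-side is the set of $a$, its pendant, the $u_i^j$'s with $\var(\ell_i^j) = 0$, the middle vertices $v_i^j$ attached to these, the $V_2$-vertices whose conflicting pair has variable assigned $0$, the $c_j$'s with $j \notin J$ whose true literals are all negative, and all pendants attached to vertices already listed; the $b$-side is symmetric. The no-crossing check reduces to three cases: each path $u_i^j v_i^j w_i^j$ is cut by the unique endpoint in $T(I) \cup F(I)$; a $V_2$-vertex $y$ adjacent to a conflicting pair $u_i^j, w_{i'}^{j'}$ is cut off from the side of whichever endpoint lies in $S$ (exactly one does, since the pair shares a variable); and a $c_j$ with $j \notin J$ has, by the definition of $J$, all its neighbors outside $S$ of a single polarity and hence on a single side.

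For inclusion-wise minimality I would verify that $G - (S \setminus \{v\})$ is connected for every $v \in S$, which amounts to checking that reinserting $v$ merges the $a$- and $b$-sides. A $u_i^j \in T(I)$ joins $a$ (on the $a$-side) to $v_i^j$, which lies on the $b$-side because $w_i^j \notin S$; the case $w_i^j \in F(I)$ is symmetric; and a $c_j$ with $j \in J$ has, by the very definition of $J$, both an $a$-side and a $b$-side neighbor outside $S$. Extending to any $S' \subsetneq S$ is routine: picking $v \in S \setminus S'$ and reinserting the extra vertices in $(S \setminus \{v\}) \setminus S'$ one at a time, each attaches to the already connected $G - (S \setminus \{v\})$ via $a$, $b$, or a true-literal neighbor of $c_j$ that lies outside $S$.

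The main obstacle is choosing $J$ correctly. Including every $c_j$ in $S$ would leave $S \setminus \{c_j\}$ a separator whenever the true literals of $C_j$ all share a polarity, destroying minimality; omitting some $j$ for which $C_j$ contains true literals of both polarities would leave an $a$-$c_j$-$b$ bridge in $G - S$. The stated $J$ is exactly tight for both constraints, and carefully tracking which $c_j$, which $V_2$-vertex, which middle vertex, and which pendant ends up on which side of $V \setminus S$ is where the bookkeeping concentrates.
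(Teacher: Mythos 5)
Your proposal is correct and follows essentially the same route as the paper: you choose the same set $J$ (the clauses satisfied by $I$ via both a positive and a negative literal, which the paper calls \emph{traversable}), and you establish the separator and minimality properties by the same case analysis on paths through the $u_i^jv_i^jw_i^j$ gadgets, the $V_2$-vertices, and the $c_j$'s. The only difference is presentational — you exhibit an explicit $a$-side/$b$-side bipartition of $G-S$ where the paper argues by contradiction on a hypothetical $a$-$b$ path — and both correctly reduce minimality to single-vertex reinsertions merging the two sides.
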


\begin{proof}
    In the following, we say that a clause $C$ is \emph{traversable}
    with respect to $I$ if $C$ contains at least one negative literal $\ell$ such that $I(\var(\ell))=0$, and at least one positive literal $\ell'$ such that $I(\var(\ell'))=1$, i.e., $C$ is satisfied by $I$ via both a positive and a negative literal.
    For example in Figure~\ref{fig:example}, the clause $(\bar{x}_1\vee x_2\vee \bar{x}_5)$ is traversable with respect to the assignment mapping every variable to $1$ except $x_1$ (which is mapped to $0$), but $(\bar{x}_3\vee \bar{x}_4\vee x_5)$ is not.
    Let $J$ be the set of indices of clauses of $\phi$ that are traversable with respect to $I$.
    We shall show that the lemma holds for such $J$.

    We first show that the described set $S$ is an $a$-$b$ separator of $G$.
    Let us assume toward a contradiction that this is not the case and let $P$ be an $a$-$b$ path in $G-S$.
    Since $I$ is a satisfying truth assignment, every path $u_{i}^{j}v_{i}^{j}w_{i}^{j}$ is intersected by $S$ and hence $P$ cannot follow a path $u_{i}^{j}v_{i}^{j}w_{i}^{j}$ for any $1\leq i\leq 3$ and $1\leq j\leq m$.
    Moreover, each vertex $y\in V_2$ has a neighbor in $S$ so $P$ does not reach any such $y$ as otherwise it would stop either at $y$ or at its pendant neighbor.
    The remaining case where $P$ contains $u_{i}^{j}$, then $c_j$, and then $w_{i'}^{j}$ for some $1\leq j\leq m$ and distinct $i,i'\in \{1,2,3\}$ corresponds to the situation of a traversable clause $C_j$, $j\in J$, and is excluded as $c_j\in S$ in that case.
    We conclude that $S$ is an $a$-$b$ separator of $G$ as desired.

    We note that $S$ is a minimal $a$-$b$ separator, as removing from $S$ any vertex belonging to $T(I)\cup F(I)$ or to $\{c_j : j\in J\}$ yields an $a$-$b$ path in $G-S$.
    Thus, for any proper subset $S'$ of $S$, the graph $G-S'$ contains a path connecting the endpoints of paths $u_{i}^{j}v_{i}^{j}w_{i}^{j}$ through $a$, $b$, and the $a$-$b$ path in $G-S'$.
    Thus an inclusion-wise minimal separator $S'$ of $G$ that is a proper subset of $S$, if it exists, must separate middle vertices, vertices $c_j$, $1\leq j\leq m$ or vertices from $V_2$ and $V_3$ from the component containing $a$ and $b$.
    Since the only way to separate these vertices using elements in $T(I)\cup F(I)\cup \{c_j : j\in J\}$ is to contain conflicting pairs of vertices or to contain $N(c_j)$ for some $j$, we conclude that no such $S'$ exists. 
\end{proof}

According to the previous two lemmas, the graph $G$ has an inclusion-wise minimal separator of size at least $4$ if and only if the formula $\phi$ is satisfiable.
Therefore, looking for an inclusion-wise minimal separator of size  $4$ or more is as hard as finding a truth assignment for $\phi$, as stated next.

\begin{corollary}
    Deciding if a graph $G$ has an inclusion-wise minimal separator of size at least $4$ is \NP-complete.
\end{corollary}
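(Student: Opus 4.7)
The plan is to prove both membership in \NP{} and \NP-hardness, the latter following directly from the preceding construction and the two lemmas.

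For membership in \NP, the natural certificate is the candidate separator $S$ itself, whose size is polynomial in $|V(G)|$. Verification requires three checks: (i) $|S|\geq 4$, which is trivial; (ii) that $S$ is a separator of $G$, which amounts to computing the connected components of $G-S$ and verifying that at least two of them are nonempty; and (iii) that $S$ is inclusion-wise minimal. For (iii) we invoke the observation made in Section~\ref{sec:preliminaries}: $S$ is inclusion-wise minimal if and only if $S\setminus\{u\}$ fails to be a separator for every $u\in S$, which reduces to $|S|$ calls to a connectivity routine and is therefore polynomial.

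For \NP-hardness, the plan is to reduce from \textsc{3-Sat}. Given a 3-CNF formula $\phi$ with $n$ variables and $m\geq 2$ clauses, we build the graph $G$ on $O(n+m^2)$ vertices described in Section~\ref{sec:hardness}; the construction is straightforward and clearly runs in polynomial time. Lemma~\ref{lem:sep} shows that if $G$ admits an inclusion-wise minimal separator of size at least $4$, then $\phi$ is satisfiable, while Lemma~\ref{lem:sat} shows that every satisfying assignment of $\phi$ yields such a separator (namely $T(I)\cup F(I)\cup\{c_j : j\in J\}$, whose size is at least one per clause path and hence at least $m\geq 2$ plus potentially the $c_j$'s, which together with $|S|\geq 4$ needs to be verified but follows from the construction since each variable appears in some clause and two clauses give at least $6$ literal vertices in $S$). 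Combining the two lemmas, $\phi\in \textsc{3-Sat}$ if and only if $G$ has an inclusion-wise minimal separator of size at least $4$, completing the reduction.

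There is no substantive obstacle left: all the creative work is in the construction of $G$ and in Lemmas~\ref{lem:sep} and~\ref{lem:sat}. The only small point worth double-checking is the \NP{} verification of inclusion-wise minimality, which, as noted, is handled by the characterization recalled in the preliminaries. Assembling these pieces yields the desired \NP-completeness.
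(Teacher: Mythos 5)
Your proposal is correct and follows essentially the same route as the paper: the hardness direction is exactly the paper's reduction from \textsc{3-Sat} via the construction of $G$ together with Lemmas~\ref{lem:sep} and~\ref{lem:sat}, and your size bound $|S|\geq 3m\geq 6$ for the separator produced by Lemma~\ref{lem:sat} is valid. The \NP{} membership argument, which the paper leaves implicit, is correctly handled by the polynomial-time minimality test recalled in the preliminaries.
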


As we will see now, this result has even stronger implications from the enumeration point of view.
It is the point of the next theorem.

\begin{theorem}
    There is no output-polynomial time algorithm for \mes{} unless $\P = \NP$.
\end{theorem}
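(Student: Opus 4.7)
The plan is to turn an assumed output-polynomial algorithm $A$ for \mes{} into a polynomial-time $3$-SAT decider, using the reduction of Section~\ref{sec:hardness} together with Lemmas~\ref{lem:sep} and~\ref{lem:sat}. The key ingredient is that the number of inclusion-wise minimal separators of $G$ of size at most $3$ is trivially bounded by $\binom{|V(G)|}{3}$, independently of whether the input formula $\phi$ is satisfiable; call this polynomial bound $N^\ast$. Combining this with the two lemmas, $\phi$ is satisfiable if and only if $G$ admits at least one inclusion-wise minimal separator of size at least $4$, equivalently if and only if the total number of inclusion-wise minimal separators of $G$ strictly exceeds $N^\ast$.

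Write $p$ for the polynomial bounding the running time of $A$ as a function of the input size and the number of emitted solutions. Given a $3$-CNF formula $\phi$, I would first build $G$ in polynomial time via the reduction of Section~\ref{sec:hardness}, then simulate $A$ on $G$ for exactly $T := p(|G|, N^\ast)$ steps while monitoring every separator $A$ emits. The decision rule is as follows: if $A$ emits some separator of size at least $4$ during the simulation, or if $A$ has not terminated by step $T$, declare $\phi$ satisfiable; otherwise ($A$ terminates within $T$ steps and never emits a separator of size at least $4$), declare $\phi$ unsatisfiable. The whole procedure clearly runs in polynomial time in $|\phi|$.

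For correctness, suppose first that $\phi$ is unsatisfiable. By Lemma~\ref{lem:sep}, every inclusion-wise minimal separator of $G$ has size at most $3$, so the total output size is at most $N^\ast$, and the output-polynomial guarantee forces $A$ to terminate within $T$ steps without ever emitting a separator of size $\geq 4$; the verdict is correct. Suppose now that $\phi$ is satisfiable. By Lemma~\ref{lem:sat}, $G$ has strictly more than $N^\ast$ inclusion-wise minimal separators, so either $A$ eventually emits a separator of size $\geq 4$ (caught by the monitoring) or $A$ cannot finish within $T$ steps (since finishing would require emitting all strictly more than $N^\ast$ separators, contradicting the bound $p(|G|, N^\ast)$ in the finished case only when output size equals $N^\ast$); in either event our rule outputs ``satisfiable''. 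Hence $3$-SAT is in $\P$, contradicting $\P \neq \NP$. I do not expect a genuine obstacle: the combinatorial substance is entirely absorbed by Lemmas~\ref{lem:sep} and~\ref{lem:sat}, and what remains is the standard trick of leveraging a polynomial a priori bound on the ``trivial'' part of the output to reduce a decision problem to enumeration under an output-polynomial hypothesis.
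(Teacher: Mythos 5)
Your proposal is correct and follows essentially the same strategy as the paper's proof: simulate the assumed algorithm for a time budget derived from a polynomial bound on the number of size-at-most-$3$ inclusion-wise minimal separators, and treat either non-termination or the emission of a size-$\geq 4$ separator as a certificate of satisfiability (the paper conditions its case analysis on whether the simulation halts, you condition on the truth of $\phi$, but the substance is identical). Two minor blemishes worth fixing, neither of which affects correctness: the count of subsets of size at most $3$ is bounded by $\binom{n}{1}+\binom{n}{2}+\binom{n}{3}$ (or simply $n^3$, as the paper uses), not by $\binom{n}{3}$ alone; and the claims that satisfiability is \emph{equivalent} to the separator count exceeding $N^\ast$, and that Lemma~\ref{lem:sat} yields strictly more than $N^\ast$ separators, are false as stated --- but they are also unnecessary, since the existence of a single size-$\geq 4$ separator already forces any terminating run of $\mathcal{A}$ to emit one, which is all your decision rule requires.
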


\begin{proof}
    Suppose for contradiction that there is an algorithm $\mathcal{A}$ enumerating the $d$ inclusion-wise minimal separators of an $n$-vertex and $m$-edge graph in total time that is polynomial in $n,m$ and $d$.
    Let $c\in \mathbb{N}$ be a constant such that the running time of $\mathcal{A}$ is bounded by $(n+m+d)^c$.
    We will prove that such an algorithm could be used to find a satisfying truth assignment for a $3$-SAT formula in polynomial time.
    
    Let $\phi$ be an instance of $3$-SAT on $N$ variables and $M$ clauses, and let $G$ be the $n$-vertex $m$-edge graph obtained from $\phi$ as described above.
    Then $n$ and $m$ are bounded by $O(N+M^2)$.
    We run $\mathcal{A}$ for $(n+m+d+1)^c$ steps on $G$ where $d=n^3$.
    If the algorithm has stopped within this time we check whether the obtained inclusion-wise minimal separators include one of size greater than $4$.
    If it is the case, then by Lemma~\ref{lem:sep} we conclude that $\phi$ is satisfiable.
    If not, we conclude that $\phi$ is not satisfiable by Lemma~\ref{lem:sat}.
    On the other hand, if the algorithm has not stopped then we conclude that the number of inclusion-wise minimal separators in $G$ is greater than $n^3$, hence that there are solutions of size at least $4$ in $G$.
    By Lemma~\ref{lem:sep} we may conclude that $\phi$ is satisfiable.
    The theorem follows observing that the procedure is polynomial in $N$ and $M$.
\end{proof}

\section{Conclusion}

We have shown that the enumeration of inclusion-wise minimal separators cannot be achieved in output-polynomial time unless $\P = \NP$, addressing an open question of Kloks and Kratsch in~\citep{kloks1998listing} and asserting that the algorithms in \citep{Korhonen::2019,brokkelkamp2020pace,DBLP:conf/iwpec/Korhonen20a} may not be improved by efficient output-sensitive enumeration algorithms for inclusion-wise minimal separators.
However, we note that our result does not give any insight on the existence of (input-sensitive) algorithms running in total time better than $2^n$, and that the existence of such algorithms would still benefit the algorithms in~\citep{Korhonen::2019,brokkelkamp2020pace,DBLP:conf/iwpec/Korhonen20a}.

% \paragraph{Acknowledgements.} 

\bibliographystyle{alpha}
\bibliography{main}

\end{document}